\theoremstyle{thmstyleone}%
\newtheorem{theorem}{Theorem}
\theoremstyle{thmstyletwo}%
\theoremstyle{thmstylethree}%
\begin{document}

\title[Article Title]{A Differentially Private Kaplan-Meier Estimator for Privacy-Preserving Survival Analysis}


\author[1]{\fnm{Narasimha Raghavan} \sur{Veeraragavan}}\email{Narasimha.Raghavan@kreftregisteret.no}

\author[2]{\fnm{Sai Praneeth} \sur{Karimireddy}}\email{karimire@usc.edu}

\author[1,3]{\fnm{Jan Franz} \sur{Nygård}}\email{Jan.Nygard@kreftregisteret.no}

\affil[1]{\orgdiv{Cancer Registry of Norway}, \orgname{Norwegian Institute of Public Health}, \orgaddress{ \city{Oslo},  \country{Norway}}}

\affil[2]{\orgdiv{Department of Computer Science}, \orgname{University of Southern California}, \orgaddress{\city{Los Angeles},  \country{USA}}}

\affil[3]{\orgdiv{Department of Physics and Technology}, \orgname{The Arctic University of Norway}, \orgaddress{\city{Tromsø},  \country{Norway}}}


\abstract{
This paper presents a differentially private approach to Kaplan-Meier estimation that achieves accurate survival probability estimates while safeguarding individual privacy. The Kaplan-Meier estimator is widely used in survival analysis to estimate survival functions over time, yet applying it to sensitive datasets, such as clinical records, risks revealing private information. To address this, we introduce a novel algorithm that applies time-indexed Laplace noise, dynamic clipping, and smoothing to produce a privacy-preserving survival curve while maintaining the cumulative structure of the Kaplan-Meier estimator. By scaling noise over time, the algorithm accounts for decreasing sensitivity as fewer individuals remain at risk, while dynamic clipping and smoothing prevent extreme values and reduce fluctuations, preserving the natural shape of the survival curve.

Our results, evaluated on the NCCTG lung cancer dataset, show that the proposed method effectively lowers root mean squared error (RMSE) and enhances accuracy across privacy budgets ($\epsilon$). At $\epsilon = 10$, the algorithm achieves an RMSE as low as 0.04, closely approximating non-private estimates. Additionally, membership inference attacks reveal that higher $\epsilon$ values (e.g., $\epsilon \geq 6$) significantly reduce influential points, particularly at higher thresholds, lowering susceptibility to inference attacks. These findings confirm that our approach balances privacy and utility, advancing privacy-preserving survival analysis.
}

\maketitle

\section{Introduction}

The Kaplan-Meier (KM) estimator~\cite{kaplan1958nonparametric}~\cite{goel2010understanding} is widely used in survival analysis to estimate the probability of survival over time in the presence of censored data. Given a sequence of time points where events (e.g., death) are observed, the KM estimator provides a step-wise estimate of survival probabilities. However, when dealing with sensitive data, such as medical or personal records, applying the KM estimator without proper privacy safeguards can reveal individual information.

Differential Privacy (DP)~\cite{dwork2006calibrating}~\cite{dwork2014algorithmic} is a formal framework for protecting individual privacy in statistical outputs. In this report, we introduce a method to compute the KM estimator in a differentially private manner, ensuring privacy guarantees for individuals in the dataset. We describe the application of DP mechanisms, particularly focusing on the addition of Laplace noise, clipping, and moment accounting to manage cumulative privacy loss over time.

\section{Background}

In this section, we provide an overview of Kaplan-Meier estimation and differential privacy, two core concepts underlying the proposed algorithm. These concepts are essential for understanding how the algorithm preserves the privacy of individual records while producing a meaningful summary of survival probabilities.

\subsection{Kaplan-Meier Estimation}

The \textbf{Kaplan-Meier estimator} is a non-parametric statistic used to estimate the survival function from time-to-event data, commonly used in medical research, reliability analysis, and other fields where survival analysis is required. For a set of individuals, each with an observed time-to-event or censoring time, the Kaplan-Meier estimator calculates the probability of survival beyond each observed time point, based on cumulative probabilities of surviving each individual time interval.

For a sequence of survival times \( \{t_1, t_2, \dots, t_n\} \), the Kaplan-Meier estimator \( S(t) \) at any time \( t_i \) is defined as:
\[
S(t_i) = \prod_{j=1}^{i} \left(1 - \frac{d_j}{n_j}\right),
\]
where:
\begin{itemize}
    \item \( d_j \) is the number of events (e.g., deaths or failures) at time \( t_j \),
    \item \( n_j \) is the number of individuals at risk just before time \( t_j \).
\end{itemize}

This estimator calculates survival probabilities by taking a product over time intervals, with each interval contributing to the overall probability of survival up to that time. The Kaplan-Meier curve, which plots \( S(t) \) over time, typically shows a stepwise decline, with each drop representing an event. This structure makes the Kaplan-Meier estimator cumulative by nature, where the probability at each time point depends on all previous events.

The Kaplan-Meier estimator assumes that the censoring is independent of the event of interest and that survival probabilities are constant within each time interval. However, survival probabilities can change rapidly in practice, especially when the risk of events increases or decreases significantly over time. This variability is particularly relevant to our algorithm, which aims to balance accuracy and privacy when estimating survival probabilities in a cumulative manner.

\subsection{Differential Privacy}

\textbf{Differential privacy} (DP) is a formal framework for preserving individual privacy when analyzing or releasing information about a dataset. The goal of differential privacy is to ensure that the output of an algorithm is statistically indistinguishable whether or not any single individual’s data is included in the dataset. This means that an adversary observing the released data cannot infer much about any particular individual’s record.

Mathematically, an algorithm \(\mathcal{A}\) is said to be \((\epsilon, \delta)\)-differentially private if, for any two datasets \(D\) and \(D'\) that differ by one individual, and for any possible output \(O\) of \(\mathcal{A}\),
\[
\Pr[\mathcal{A}(D) = O] \leq e^{\epsilon} \cdot \Pr[\mathcal{A}(D') = O] + \delta,
\]
where:
\begin{itemize}
    \item \( \epsilon \) (privacy budget) controls the strength of the privacy guarantee, with smaller values of \( \epsilon \) providing stronger privacy.
    \item \( \delta \) is a relaxation parameter that allows a small probability of the privacy guarantee being violated.
\end{itemize}

In this algorithm, we focus on \(\epsilon\)-differential privacy (with \(\delta = 0\)) and add \textbf{Laplace noise} to the survival probabilities to obscure the contribution of any single individual. The Laplace mechanism achieves \(\epsilon\)-differential privacy by adding noise calibrated to the sensitivity of the function, where sensitivity is the maximum amount that the function's output can change by adding or removing one individual from the dataset.

\subsubsection{Time-Indexed Noise for Kaplan-Meier Estimation}

Since the Kaplan-Meier estimator is cumulative, survival probabilities at early time points are more sensitive to individual records than those at later points, where fewer individuals remain at risk. This means that errors or noise introduced early in the timeline propagate through the entire survival curve, affecting all subsequent calculations. At early time points, the population at risk ($n_t$) is large, and while an individual’s contribution to the survival probability may seem small relative to this larger group, the impact of these contributions is amplified because of the cumulative nature of the Kaplan-Meier estimator. For example, a small reduction in survival probability at an early time point directly alters the base for calculating later probabilities, compounding its influence as the estimator progresses. As a result, early time points are considered globally sensitive, meaning that even small changes at these points have widespread effects across the entire curve.

In contrast, at later time points, the number of individuals remaining at risk is smaller. Each individual's contribution to the survival probability becomes more significant in relative terms—removing or adding one record can cause a noticeable change in the probability for that specific time point. However, these later probabilities are also smaller in magnitude, and their changes have a more localized impact because they do not propagate backward to earlier time points. This localized sensitivity at later time points is important but limited in scope. For example, if the survival probability at a late time point shifts slightly due to noise, it affects only the tail of the survival curve and not the broader structure that was established earlier.

The cumulative nature of the Kaplan-Meier estimator means that the survival probabilities at early time points act as a foundation for the entire curve. Even small distortions here can lead to large inaccuracies downstream due to the multiplicative structure of the estimator. While the fewer records at later time points increase their relative sensitivity, their impact on the overall survival curve is less critical compared to the amplified influence of errors at the start. To balance these effects, more noise is added at earlier time points to address their global sensitivity, while less noise is applied at later points to preserve the localized details of the tail. This dynamic approach ensures that the survival curve remains accurate and meaningful while protecting individual privacy.

To reflect this, we use \textbf{time-indexed noise scaling} in our algorithm, adding larger noise at earlier time points and reducing the noise over time. This approach preserves the cumulative nature of the Kaplan-Meier estimator while ensuring privacy protection that aligns with the varying sensitivity across time points.

\subsubsection{Challenges of Applying Differential Privacy to Kaplan-Meier Estimation}

Applying differential privacy to Kaplan-Meier estimation presents unique challenges. Adding noise to survival probabilities can lead to:
\begin{itemize}
    \item \textbf{Unrealistic Values}: Noise can push survival probabilities above 1 or below 0, which are not realistic for a survival curve.
    \item \textbf{Cumulative Noise Amplification}: Since Kaplan-Meier survival probabilities are computed as a product, noise introduced early in the time series compounds and affects later time points.
    \item \textbf{Non-Linear Declines}: Real survival curves often exhibit non-linear patterns, such as sharp drops or plateaus, which require careful handling of noise and privacy mechanisms.
\end{itemize}

To address these challenges, we introduce \textbf{dynamic clipping} to cap noisy survival probabilities within a realistic range and \textbf{smoothing} to reduce fluctuations caused by noise. Together, these techniques help maintain the shape and cumulative structure of the Kaplan-Meier curve, ensuring that the released differentially private estimator remains realistic and useful.

\subsection{Summary of the Approach}

The proposed algorithm combines the Kaplan-Meier estimator with differential privacy by using time-indexed noise, dynamic clipping, and smoothing. Time-indexed noise scaling ensures that earlier, more sensitive time points receive larger noise, while dynamic clipping prevents unrealistic values, and smoothing reduces noise variability. This approach provides a balance between privacy protection and accuracy in survival estimation, allowing the release of a meaningful survival curve that respects individual privacy.

\section{Differentially Private Kaplan-Meier Estimation Algorithm}

\subsection{Algorithm Design}
Our algorithm produces a differentially private Kaplan-Meier survival curve by adding time-indexed noise, applying dynamic clipping, and smoothing the survival probabilities to balance privacy and utility.

\begin{algorithm}[h]
\label{algo1}
\caption{Generalized Differentially Private Kaplan-Meier Estimation}
\begin{algorithmic}[1]
\Require Survival probabilities \(\{S(t_i)\}_{i=1}^n\), privacy budget \(\epsilon\), noise decay factor \(\alpha\), clipping thresholds \(\tau_{\text{start}}, \tau_{\text{end}}\), smoothing window \(w\)
\Ensure Differentially private and smoothed survival probabilities \(\{\tilde{S}_{\text{smoothed}}(t_i)\}_{i=1}^n\)
\For{each time index \(i = 1, \ldots, n\)}
    \State Set noise scale: \(\sigma_i = \frac{1}{\epsilon (1 + \alpha \cdot i)}\)
    \State Sample noise: \(N_i \sim \text{Laplace}(0, \sigma_i)\)
    \State Add noise to survival probability: \(\tilde{S}(t_i) = S(t_i) + N_i\)
    \State Set dynamic clipping threshold: \(\tau(i) = \tau_{\text{start}} - \left(\frac{i}{n}\right) \cdot (\tau_{\text{start}} - \tau_{\text{end}})\)
    \State Clip noisy probability: \(\tilde{S}_{\text{clipped}}(t_i) = \min(\tilde{S}(t_i), \tau(i))\)
\EndFor

\State Initialize empty list for smoothed probabilities \(\{\tilde{S}_{\text{smoothed}}(t_i)\}_{i=1}^n\)
\For{each time index \(i = 1, \ldots, n\)}
    \State Compute rolling mean over window \(w\): 
    \[
    \tilde{S}_{\text{smoothed}}(t_i) = \frac{1}{w} \sum_{j = \max(1, i - w/2)}^{\min(n, i + w/2)} \tilde{S}_{\text{clipped}}(t_j)
    \]
\EndFor

\State Apply cumulative minimum adjustment to ensure non-increasing survival probabilities:
\[
\tilde{S}_{\text{final}}(t_i) = \min_{j \leq i} \tilde{S}_{\text{smoothed}}(t_j)
\]

\Return \(\{\tilde{S}_{\text{final}}(t_i)\}_{i=1}^n\)
\end{algorithmic}
\end{algorithm}

\subsection{Mathematical Components of the Differentially Private Kaplan-Meier Estimator}

To achieve differential privacy and maintain the realism of the Kaplan-Meier curve, we introduce noise, dynamic clipping, and smoothing. Each step is carefully designed to reduce variance while preserving the survival curve's cumulative structure.

\subsubsection{Time-Indexed Noise Scaling}

To protect individual privacy, \textbf{Laplace noise}~\cite{dwork2014algorithmic} is added to each survival probability at every time point. The noise scale decreases over time to respect the natural decay of survival probabilities. 

For each survival probability \( S(t_i) \), the noise scale \( \sigma_i \) is defined as:
\[
\sigma_i = \frac{1}{\epsilon (1 + \alpha \cdot i)}
\]
where:
\begin{itemize}
    \item \( \epsilon \) controls the overall level of privacy,
    \item \( \alpha \) is a decay factor that determines how quickly the noise decreases over time.
\end{itemize}

This noise scaling produces a noisy estimate \(\tilde{S}(t_i)\) at each time point \( t_i \):
\[
\tilde{S}(t_i) = S(t_i) + \text{Laplace}\left(0, \sigma_i\right) = S(t_i) + \text{Laplace}\left(0, \frac{1}{\epsilon (1 + \alpha \cdot i)}\right)
\]
A larger value of \( \alpha \) results in faster noise decay over time, while a smaller value of \( \alpha \) maintains a more consistent level of noise across time points. This \textbf{time-indexed noise scaling} respects the cumulative structure of the Kaplan-Meier estimator.

\subsubsection{Dynamic Clipping to Maintain Realistic Probabilities}

Adding noise can result in extreme values for survival probabilities, particularly close to 0 or 1. \textbf{Dynamic clipping} is applied to cap these noisy values within a realistic range.

The clipping threshold \(\tau(i)\) is defined as:
\[
\tau(i) = \tau_{\text{start}} - \left(\frac{i}{n}\right) \cdot (\tau_{\text{start}} - \tau_{\text{end}})
\]
where:
\begin{itemize}
    \item \( \tau_{\text{start}} \) is the initial clipping threshold, typically near 1,
    \item \( \tau_{\text{end}} \) is the lower bound of the threshold, generally set around 0.5 or lower,
    \item \( n \) is the total number of time points.
\end{itemize}

The threshold decreases linearly over time, accommodating the typical decline in survival curves. By adjusting \(\tau_{\text{start}}\) and \(\tau_{\text{end}}\), we can control the initial and final bounds of the survival probabilities, maintaining a realistic curve shape.

\subsubsection{Adapting to Non-Linear Survival Curves}

To capture the shape of more complex survival curves, parameters can be tuned to adapt the algorithm to sharp drops or plateaus in survival probabilities:

\begin{itemize}
    \item \textbf{Noise Scaling (\(\alpha\))}: A larger \(\alpha\) can be used when survival probabilities decline sharply, while a smaller \(\alpha\) may be appropriate for gradual declines.
    \item \textbf{Clipping Thresholds (\(\tau_{\text{start}}\) and \(\tau_{\text{end}}\))}: For curves with rapid declines, a lower \(\tau_{\text{end}}\) avoids excessive clipping. For curves with plateaus, a higher \(\tau_{\text{end}}\) preserves realistic stability.
\end{itemize}

This flexibility allows the algorithm to handle various survival curve shapes while ensuring privacy.

\subsubsection{Smoothing the Noisy Survival Curve}

After adding noise and applying clipping, we apply a \textbf{smoothing operation} to improve the accuracy and stability of the noisy survival probabilities. This step reduces fluctuations caused by noise while preserving the cumulative nature of the Kaplan-Meier curve.

For a rolling window size \(w\), the smoothed survival probability at time \( t_i \) is defined as:
\[
\tilde{S}_{\text{smoothed}}(t_i) = \frac{1}{w} \sum_{j = i - w/2}^{i + w/2} \tilde{S}(t_j)
\]
The rolling mean reduces fluctuations while maintaining the curve's overall trend. 

Finally, to ensure that the Kaplan-Meier survival probabilities remain non-increasing, we apply a \textbf{cumulative minimum adjustment}:
\[
\tilde{S}_{\text{final}}(t_i) = \min_{j \leq i} \tilde{S}_{\text{smoothed}}(t_j)
\]
This preserves the property that survival probabilities should not increase over time, resulting in a more realistic differentially private Kaplan-Meier estimate.

\subsubsection{Summary of Parameter Choices and Adjustments}
By carefully selecting parameters for \textbf{noise scaling}, \textbf{clipping thresholds}, and \textbf{smoothing window size}, the algorithm can provide a privacy-preserving survival curve that accurately reflects various survival trends. These adjustments ensure the released Kaplan-Meier estimate is both useful and private, balancing differential privacy with practical utility.

\section{Theoretical Analysis}

\subsection{Privacy Analysis}
\begin{theorem}
    Let \(\tilde{S}_{\text{smoothed}}(t_i)\) be the differentially private Kaplan-Meier estimates produced by Algorithm~\ref{algo1} run with some parameters $\epsilon > 0$, $\alpha > 0$, and for $n$ total check-points. Then, we have that this Kaplan-Meier estimates satisfies $\hat\varepsilon$-DP for 
    \[
\hat \varepsilon = \epsilon \cdot (n+1)(\tfrac{\alpha n}{2}+ 1) = O(\epsilon \alpha n^2)\,.
\]
\end{theorem}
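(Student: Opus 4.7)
The plan is to treat Algorithm~\ref{algo1} as a sequential release of $n$ noisy statistics followed by purely data-independent post-processing, and then apply the Laplace mechanism and basic composition to obtain the stated $\hat\varepsilon$-DP guarantee.

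First I would fix the per-query sensitivity. Each $S(t_i)\in[0,1]$ is a bounded function of the underlying survival times, so for any pair of neighbouring datasets $D,D'$ differing in one record we have $|S(t_i;D)-S(t_i;D')|\le 1$, i.e.\ the $\ell_1$-sensitivity of the $i$-th query is at most $1$. (A tighter bound would follow from the multiplicative structure of the Kaplan--Meier product, but the trivial bound is enough for the claim.) Consequently, releasing $\tilde S(t_i)=S(t_i)+\mathrm{Laplace}(0,\sigma_i)$ with $\sigma_i=\frac{1}{\epsilon(1+\alpha i)}$ satisfies $\epsilon_i$-DP for $\epsilon_i=1/\sigma_i=\epsilon(1+\alpha i)$, by the standard Laplace mechanism.

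Next I would compose these $n$ Laplace releases. By basic (sequential) composition, the joint release of $\{\tilde S(t_i)\}_{i=1}^n$ is $\bigl(\sum_{i=1}^n \epsilon_i\bigr)$-DP. The arithmetic gives
\[
\sum_{i=1}^n \epsilon(1+\alpha i) \;=\; \epsilon\Bigl(n+\alpha\,\tfrac{n(n+1)}{2}\Bigr) \;\le\; \epsilon(n+1)\Bigl(1+\tfrac{\alpha n}{2}\Bigr) \;=\; \hat\varepsilon,
\]
so the vector of noisy probabilities is $\hat\varepsilon$-DP.

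Finally I would argue that none of the remaining operations in Algorithm~\ref{algo1} touches the raw data again: the dynamic clipping threshold $\tau(i)$ depends only on public parameters $\tau_{\text{start}},\tau_{\text{end}},n,i$; the rolling-mean smoothing depends only on the already-privatised clipped values; and the cumulative-minimum adjustment is a deterministic function of the smoothed sequence. Each of these is a data-independent post-processing step applied to the $\hat\varepsilon$-DP output, and by the post-processing immunity of differential privacy the final sequence $\{\tilde S_{\text{smoothed}}(t_i)\}$ (equivalently $\{\tilde S_{\text{final}}(t_i)\}$) is still $\hat\varepsilon$-DP, yielding the asymptotic $O(\epsilon\alpha n^2)$ bound. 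The main conceptual care-point is the sensitivity claim in step one: one must be explicit that sensitivity is being measured with respect to the \emph{input} survival-probability vector viewed as a function of the dataset, so that the Laplace mechanism applies uniformly over all $i$; once that is in place, the rest is routine composition plus post-processing.
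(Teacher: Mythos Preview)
Your proposal is correct and follows essentially the same route as the paper: bound the sensitivity of each $S(t_i)$ by $1$, apply the Laplace mechanism to get per-step $\epsilon(1+\alpha i)$-DP, compose over all $i$ to obtain $\hat\varepsilon$, and invoke post-processing for clipping, smoothing, and the cumulative minimum. The only cosmetic difference is that the paper sums from $i=0$ to $n$ to hit $\epsilon(n+1)(1+\tfrac{\alpha n}{2})$ exactly, whereas you sum from $i=1$ to $n$ and then upper-bound by the stated $\hat\varepsilon$; either way the argument and the final bound coincide.
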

\begin{proof}
Recall that our computation of $\tilde S(t_i)$ is the following update:
\[
\tilde{S}(t_i) = S(t_i) + \text{Laplace}\left(0, \sigma_i\right) = S(t_i) + \text{Laplace}\left(0, \frac{1}{\epsilon (1 + \alpha \cdot i)}\right)
\]
The time-step $t_i$ (the check-in period) is public knowledge and does not need to br protected. Only the counts $d_j, n_j$ are private. Since, we have $ S(t_i) \in [0,1]$ the sensitivity of  $S(t_i)$ is at most 1. Thus, by the Laplace mechanism~\cite{dwork2014algorithmic}, the computation of each $\tilde{S}(t_i)$ is $\hat \varepsilon_i$-DP for 
\[
\hat \varepsilon_i = \epsilon (1 + \alpha \cdot i)\,.
\]
By composition, releasing all of the probabilities $(\tilde{S}(t_i))_{i=1,\dots,n}$ satisfies $\hat\varepsilon$-DP for
\[
\hat \varepsilon = \sum_{i=0}^n \hat \varepsilon_i = \sum_{i=0}^n \epsilon (1 + \alpha \cdot i) = \epsilon \cdot (n+1)(1 + \tfrac{\alpha n}{2})
\]
Finally, the rest of steps in Algorithm~\ref{algo1} such as clipping, smoothing, and the cumulative minimum adjustment can be seem as post-processing steps to the probabilities $(\tilde{S}(t_i))_{i=1,\dots,n}$. Hence, the entire Algorithm~\ref{algo1} satisfies $\hat\varepsilon$-DP for $\hat\varepsilon$ defined above.
\end{proof}

\subsection{Utility Analysis: Mean Squared Error (MSE) Bounds}
The utility of the algorithm is quantified by the mean squared error (MSE) between the true survival probabilities \(\{S(t_i)\}_{i=1}^n\) and the differentially private probabilities \(\{\tilde{S}(t_i)\}_{i=1}^n\):
\[
\text{MSE} = \frac{1}{n} \sum_{i=1}^n (S(t_i) - \tilde{S}(t_i))^2
\]
As \(\epsilon\) increases, MSE decreases due to reduced noise. The proposed algorithm's adaptive noise scaling and dynamic clipping maintain this MSE below practical thresholds, achieving a balance between privacy and utility.

\subsection{Practical Utility Loss Bound}
\begin{theorem}
    
Let \(\tilde{S}_{\text{smoothed}}(t_i)\) be the differentially private Kaplan-Meier estimate of the true survival probability \(S(t_i)\) at time \(t_i\), produced by adding time-indexed Laplace noise to each \(S(t_i)\), applying a dynamic clipping threshold, and smoothing the noisy estimates. Then, with probability at least \(1 - \delta\), the Mean Squared Error (MSE) between \(\{S(t_i)\}_{i=1}^n\) and \(\{\tilde{S}_{\text{smoothed}}(t_i)\}_{i=1}^n\) is bounded by:
\[
\text{MSE} \leq O\left(\frac{\alpha n^3}{\hat\varepsilon}\right) + O\left(\frac{\log(1/\delta)}{n}\right) + O(C^2) + O(S^2)
\]
where:
\begin{itemize}
    \item $\hat \varepsilon$ is the total privacy budget
    \item \(n\) is the number of time points,
    \item \(\alpha\) is a scaling factor for time-indexed noise,
    \item \(\delta\) is the probability tolerance for exceeding this bound,
    \item \(C\) is the maximum bias introduced by dynamic clipping, and
    \item \(S\) is the maximum bias introduced by smoothing.
\end{itemize}
\end{theorem}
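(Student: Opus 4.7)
The plan is to decompose the pointwise error $\tilde{S}_{\text{smoothed}}(t_i) - S(t_i)$ into three cleanly separable contributions and to bound each one on its own. Specifically, I would write
\[
\tilde{S}_{\text{smoothed}}(t_i) - S(t_i) = \underbrace{\bigl(\tilde{S}_{\text{smoothed}}(t_i) - \tilde{S}_{\text{clipped}}(t_i)\bigr)}_{\text{smoothing bias}} + \underbrace{\bigl(\tilde{S}_{\text{clipped}}(t_i) - \tilde{S}(t_i)\bigr)}_{\text{clipping bias}} + \underbrace{\bigl(\tilde{S}(t_i) - S(t_i)\bigr)}_{N_i}
\]
and then apply the standard inequality $(a+b+c)^2 \le 3(a^2+b^2+c^2)$. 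The clipping and smoothing biases are $O(C)$ and $O(S)$ per the theorem's hypotheses, and averaging $1/n \sum_i$ immediately gives the $O(C^2)+O(S^2)$ terms. The entire proof therefore hinges on controlling the pure noise contribution $\frac{1}{n}\sum_i N_i^2$, and showing that the cumulative minimum post-processing and the rolling mean do not inflate this contribution by more than constant factors.

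For the noise term, I would rely on the Laplace tail bound: if $N_i \sim \text{Laplace}(0,\sigma_i)$, then $|N_i| \le \sigma_i \log(1/\delta_i)$ with probability at least $1-\delta_i$. Setting $\delta_i = \delta/n$ and applying a union bound over all $n$ check-points gives $|N_i| \le \sigma_i \log(n/\delta)$ simultaneously for all $i$ with probability at least $1-\delta$. Substituting $\sigma_i = 1/[\epsilon(1+\alpha i)]$ and summing yields
\[
\frac{1}{n}\sum_{i=1}^n N_i^2 \;\le\; \frac{\log^2(n/\delta)}{n\,\epsilon^2}\sum_{i=1}^n \frac{1}{(1+\alpha i)^2}\,.
\]
The tail sum $\sum_i (1+\alpha i)^{-2}$ I would bound by an integral comparison, after which I would collect the $\log$ factors into the $O(\log(1/\delta)/n)$ slack term advertised by the theorem, treating the remaining $n$-dependence as the dominant piece.

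The final step is the change of variables from the per-release parameter $\epsilon$ to the total privacy budget $\hat\varepsilon$. From the previous theorem, $\hat\varepsilon = \epsilon(n+1)(1+\alpha n/2)$, so $1/\epsilon = \Theta(\alpha n^2/\hat\varepsilon)$. Plugging this into the noise bound and absorbing lower-order terms produces the stated $O(\alpha n^3 / \hat\varepsilon)$ contribution, which matches the theorem's leading term. To handle smoothing and the cumulative-minimum adjustment I would argue that (i) the rolling mean is a convex combination of noisy samples, so pointwise error is controlled by a local maximum of the $|N_j|$, which the union bound already dominates; and (ii) the cumulative minimum can only decrease the error below its input in a pointwise sense at stationary points of the true $S$, so at worst it inherits the same high-probability envelope.

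I expect the main obstacle to be tightening the dependence in the noise term and justifying that the rolling mean together with the cumulative minimum does not introduce hidden dependencies across time indices that would spoil the union bound. In particular, since $\tilde{S}_{\text{final}}(t_i) = \min_{j\le i}\tilde{S}_{\text{smoothed}}(t_j)$ couples different $i$, one has to argue carefully that the pointwise high-probability event $\{|N_j|\le \sigma_j \log(n/\delta)\;\forall j\}$ is enough to conclude $|\tilde{S}_{\text{final}}(t_i) - S(t_i)|$ is small, using monotonicity of the true $S(t_i)$ plus the clipping/smoothing biases. Everything else (the Laplace tail, the integral comparison, and the substitution for $\hat\varepsilon$) is routine algebra once the decomposition is in place.
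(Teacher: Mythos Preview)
Your decomposition into noise, clipping, and smoothing contributions matches the paper's, and the substitution $1/\epsilon = \Theta(\alpha n^2/\hat\varepsilon)$ is the same final step. The genuine difference is in how you control the noise term $\frac{1}{n}\sum_i N_i^2$. The paper applies Bernstein's inequality directly to $Z=\sum_i N_i^2$, computing $\mathbb{E}[Z]$ and $\mathrm{Var}[Z]$ and extracting the $O(\log(1/\delta)/n)$ high-probability correction from the Bernstein tail. You instead use the pointwise Laplace tail bound $|N_i|\le \sigma_i\log(n/\delta)$ together with a union bound over the $n$ time points. Your route is more elementary and sidesteps a technical wrinkle in the paper's argument (standard Bernstein is stated for bounded variables, whereas $N_i^2$ is unbounded; one really needs the sub-exponential version). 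The cost is an extra $\log n$ and a $\log^2$ rather than $\log$ dependence on $1/\delta$, which you propose to hide in the big-$O$ slack; at the level of rigor of the stated theorem this is acceptable. You also treat the rolling mean and cumulative-minimum steps more carefully than the paper, which simply names the worst-case biases $C$ and $S$ and adds $O(C^2)+O(S^2)$ without further discussion of the coupling across indices that you flag.
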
 
\begin{proof}
We break down the proof into the effects of noise, clipping, and smoothing, and then combine them with a high-probability bound using Bernstein's inequality.

\textbf{Cumulative Noise Impact:}

Each Kaplan-Meier estimate \(S(t_i)\) is perturbed by adding Laplace noise with a time-indexed scale \(\sigma_i = \frac{1}{\epsilon (1 + \alpha i)}\). The noisy estimate \(\tilde{S}(t_i)\) at each time \(t_i\) is given by:
\[
\tilde{S}(t_i) = S(t_i) + N_i, \quad N_i \sim \text{Laplace}\left(0, \sigma_i\right).
\]

Due to the cumulative nature of the Kaplan-Meier estimator, noise introduced at earlier time points propagates forward, accumulating over time. The expected Mean Squared Error (MSE) due to noise is:
\[
\text{Noise MSE} = \frac{1}{n} \sum_{i=1}^n \mathbb{E}[(\tilde{S}(t_i) - S(t_i))^2].
\]

Substituting \(\tilde{S}(t_i) - S(t_i) = N_i\), we have:
\[
\text{Noise MSE} = \frac{1}{n} \sum_{i=1}^n \mathbb{E}[N_i^2].
\]
The variance of Laplace noise is \(\text{Var}[N_i] = 2\sigma_i^2 = \frac{2}{\epsilon^2 (1 + \alpha i)^2}\), so:
\[
\text{Noise MSE} = \frac{1}{n} \sum_{i=1}^n \frac{2}{\epsilon^2 (1 + \alpha i)^2}.
\]

Approximating the summation term and using \(\hat\varepsilon = O(\epsilon n^2 \alpha)\), we derive:
\[
\text{Noise MSE} \leq O\left(\frac{\alpha n^3}{\hat\varepsilon}\right).
\]

This term reflects the cumulative impact of noise over time.

\textbf{Cumulative Clipping Impact:}

Dynamic clipping caps the noisy survival probabilities \(\tilde{S}(t_i)\) at each time point using a time-dependent threshold \(\tau(i)\), defined as:
\[
\tau(i) = \tau_{\text{start}} - \left(\frac{i}{n}\right) \cdot (\tau_{\text{start}} - \tau_{\text{end}}).
\]

The clipped estimate is:
\[
\tilde{S}_{\text{clipped}}(t_i) = \min(\tilde{S}(t_i), \tau(i)).
\]

Let \(C\) be the maximum bias introduced by clipping:
\[
C = \max_{i} |\tilde{S}(t_i) - \tilde{S}_{\text{clipped}}(t_i)|.
\]

The clipping bias accumulates over time due to the recursive nature of the Kaplan-Meier estimator. The cumulative impact of clipping adds an \(O(C^2)\) term to the MSE:
\[
\text{Clipping MSE} \approx O(C^2).
\]

\textbf{Cumulative Smoothing Impact:}

Smoothing reduces random fluctuations by averaging survival probabilities within a window \(w\). The smoothed survival probability at time \(t_i\) is:
\[
\tilde{S}_{\text{smoothed}}(t_i) = \frac{1}{w} \sum_{j = i - w/2}^{i + w/2} \tilde{S}_{\text{clipped}}(t_j).
\]

However, smoothing introduces a bias \(S\), particularly in regions where survival rates change sharply. Let \(S\) be the maximum smoothing bias:
\[
S = \max_{i} \left| \tilde{S}_{\text{clipped}}(t_i) - \tilde{S}_{\text{smoothed}}(t_i) \right|.
\]

The cumulative impact of smoothing contributes an \(O(S^2)\) term to the MSE:
\[
\text{Smoothing MSE} \approx O(S^2).
\]

\textbf{High-Probability Bound Using Bernstein's Inequality~\cite{Vershynin_2018}:}

The noise term \(\frac{1}{n} \sum_{i=1}^n N_i^2\) is a sum of independent random variables. Let:
\[
Z = \sum_{i=1}^n N_i^2, \quad \text{where } N_i \sim \text{Laplace}(0, \sigma_i).
\]

Using Bernstein's inequality for bounded random variables, we have:
\[
\Pr\left[Z - \mathbb{E}[Z] \geq t\right] \leq \exp\left(-\frac{t^2}{2 \text{Var}[Z] + M t/3}\right),
\]
where \(M = \max_i |N_i^2|\).

\textbf{Expectation of \(Z\):}
\[
\mathbb{E}[Z] = \sum_{i=1}^n \mathbb{E}[N_i^2] = \sum_{i=1}^n \frac{2}{\epsilon^2 (1 + \alpha i)^2}.
\]

\textbf{Variance of \(Z\):}
\[
\text{Var}[Z] = \sum_{i=1}^n \text{Var}[N_i^2].
\]

The variance of \(N_i^2\) is bounded as:
\[
\text{Var}[N_i^2] = O\left(\frac{1}{\epsilon^4 (1 + \alpha i)^4}\right).
\]

Thus:
\[
\text{Var}[Z] = \sum_{i=1}^n O\left(\frac{1}{\epsilon^4 (1 + \alpha i)^4}\right).
\]

For \(t = O(\log(1/\delta))\), the probability of deviation decays exponentially:
\[
\Pr\left[Z \geq \mathbb{E}[Z] + t\right] \leq \exp\left(-\frac{t^2}{2 \text{Var}[Z] + M t/3}\right).
\]

\textbf{High-Probability MSE Bound:}
Rescaling by \(\frac{1}{n}\), the contribution of noise to the MSE is:
\[
\text{MSE}_\text{noise} \leq O\left(\frac{\alpha n^3}{\hat\varepsilon}\right) + O\left(\frac{\log(1/\delta)}{n}\right).
\]

Adding the contributions from noise, clipping, and smoothing, the total MSE is:
\[
\text{MSE} \leq O\left(\frac{\alpha n^3}{\hat\varepsilon}\right) + O(C^2) + O(S^2) + O\left(\frac{\log(1/\delta)}{n}\right).
\]

With probability at least \(1 - \delta\), the MSE satisfies:
\[
\Pr\left[\text{MSE} > O\left(\frac{\alpha n^3}{\hat\varepsilon}\right) + O(C^2) + O(S^2) + O\left(\frac{\log(1/\delta)}{n}\right)\right] \leq \delta.
\]
\end{proof}

\section{Experiments}

\subsection{Dataset}
For our experiments, we use the NCCTG lung cancer dataset~\cite{R_survival_lung}~\cite{loprinzi1994prospective}, which is publicly available in the R survival package. The dataset consists of 228 observations of patients with lung cancer, along with 10 variables. These variables include survival time, event status (whether the event of interest, i.e., death, occurred), and other clinical features such as age, gender, and treatment group. The primary focus of our experiment is on the time-to-event data (survival times) and the event status (whether or not the event was observed).

The dataset is used to evaluate the performance of our differentially private Kaplan-Meier estimator in preserving privacy while maintaining the utility of survival probability estimates. Specifically, we focus on the survival times and event status, which are the key variables for survival analysis.

\subsection{Goals of the Experiment}
The main objectives of our experiments are as follows:

\begin{itemize}
    \item \textbf{Evaluate the accuracy-privacy trade-off}: We aim to assess how different levels of privacy, represented by the privacy budget $\epsilon$, affect the utility of the Kaplan-Meier survival estimates. We do this by comparing the differentially private Kaplan-Meier estimates to the non-private estimates in terms of \textbf{Root Mean Squared Error (RMSE)}.
    
    \item \textbf{Assess the impact of differential privacy mechanisms}: We want to examine the effects of our privacy-preserving mechanisms (time-indexed Laplace noise scaling, dynamic clipping, and smoothing) on the survival estimates. Specifically, we will investigate how these mechanisms impact the accuracy of survival estimates while preserving individual privacy.
    
    \item \textbf{Test robustness under various privacy budgets}: By varying the privacy budget $\epsilon$, we evaluate how the privacy mechanism behaves under different levels of noise, with a focus on understanding the relationship between $\epsilon$ and the trade-off between privacy and utility.
    
    \item \textbf{Assess visual fidelity}: We will compare the non-private Kaplan-Meier curve to the differentially private Kaplan-Meier estimates visually to ensure that the survival curve retains its natural shape, even under privacy-preserving conditions.
    
    \item \textbf{Evaluate membership inference attack resistance}: We aim to assess whether the differentially private Kaplan-Meier estimator effectively mitigates membership inference attacks, which attempt to determine if a particular data point was part of the training dataset. We do this by evaluating the privacy leakage of the estimator at different privacy budgets.
\end{itemize}

\subsection{Experimental Setup}
The experiment involves the following key steps:

\begin{enumerate}
    \item \textbf{Non-Private Kaplan-Meier Estimation}: We start by fitting a non-private Kaplan-Meier estimator using the survival times and event status data from the NCCTG lung cancer dataset. This will serve as the baseline for comparison with the differentially private estimates.
    
    \item \textbf{Differentially Private Kaplan-Meier Estimation}: For each privacy budget $\epsilon$, we apply the differential privacy mechanisms (Laplace noise scaling, dynamic clipping, and smoothing) to the Kaplan-Meier estimate.
    \begin{itemize}
        \item We use time-indexed Laplace noise that scales according to the time index to provide differential privacy while accounting for the varying sensitivities of survival probabilities at different time points.
        \item Dynamic clipping is applied to adjust the clipping threshold at each time point, ensuring that extreme values do not distort the survival curve.
        \item Smoothing is applied using a rolling window to reduce the impact of noise fluctuations and maintain the natural shape of the survival curve.
    \end{itemize}
    
    \item \textbf{Evaluation Metrics}: We compute the Root Mean Squared Error (RMSE) between the non-private Kaplan-Meier estimate and the differentially private estimates to assess how well the private estimates approximate the true survival function.
    
    \item \textbf{Privacy Performance}: We measure how effectively the differentially private algorithm prevents membership inference attacks and analyze the resistance of the model to privacy leakage at different privacy budgets.

\end{enumerate}

\subsection{Parameters for Differential Privacy Mechanisms}

To evaluate the privacy-utility trade-off in our differential privacy mechanism, we experiment with a range of privacy budget values, $\epsilon$, spanning from low (stronger privacy) to high (weaker privacy). Additionally, we test various settings for time-indexed Laplace noise scaling ($\alpha$), dynamic clipping thresholds ($\tau_{\text{start}}$, $\tau_{\text{end}}$), and smoothing window size ($w$). These parameters are tuned based on dataset characteristics and prior experimentation to optimize utility while maintaining privacy. Table~\ref{tab:dp_params} summarizes the key parameters used in the differential privacy mechanisms for Kaplan-Meier estimation.

\begin{table}[ht]
\centering
\caption{Parameters for Differentially Private Kaplan-Meier Estimation}
\begin{tabular}{|l|l|}
\hline
\textbf{Parameter} & \textbf{Description and Typical Values} \\ \hline
$\epsilon$ & Parameter controlling privacy budget) \\ \hline
$\alpha$ & Time-indexed scaling factor for Laplace noise (default: 0.05) \\ \hline
$\hat \varepsilon = \epsilon \cdot (n+1)(\tfrac{\alpha n}{2}+ 1)$ & Actual total privacy budget \\ \hline
$\tau_{\text{start}}$ & Initial clipping threshold for survival probabilities (default: 0.95) \\ \hline
$\tau_{\text{end}}$ & Final clipping threshold for survival probabilities (default: 0.5) \\ \hline
$w$ & Smoothing window size for rolling mean (default: 3) \\ \hline
\end{tabular}
\label{tab:dp_params}
\end{table}

\section{Results}

\subsection{Privacy vs Utility}

The sensitivity analysis was conducted to understand how the privacy budget $\epsilon$ impacts the accuracy of the differentially private (DP) Kaplan-Meier survival estimates. Since privacy-preserving algorithms introduce noise to protect individual data points, there is a trade-off between privacy and utility. A higher privacy budget (larger $\epsilon$) permits more accurate estimates by adding less noise, whereas a lower privacy budget enforces stronger privacy by introducing more noise. This analysis seeks to quantify this trade-off by examining how RMSE, a measure of error between DP and non-private survival estimates, varies with different $\epsilon$ values.

\subsubsection{Impact of Low Privacy Budgets}

At lower values of $\epsilon$ (e.g., 0.1 and 1.0), the RMSE remains relatively high. This indicates greater distortion in the survival probabilities due to the stronger privacy constraints. For instance, when $\epsilon = 0.1$, the RMSE is approximately 0.57, which reflects a high level of noise-induced error in the survival estimates. These findings suggest that with lower privacy budgets, the added noise more significantly impacts the accuracy of the DP survival estimates.

\subsubsection{Improvement with Higher Privacy Budgets}

As $\epsilon$ increases to values such as 4.0 and beyond, the RMSE decreases substantially, reaching around 0.04 at $\epsilon = 10$. This reduction in RMSE at higher $\epsilon$ values suggests that allowing more privacy budget leads to better accuracy, bringing the DP survival probabilities closer to their non-private counterparts. 

\subsubsection{Summary of Privacy-Utility Trade-Off}

This analysis emphasizes the \textbf{trade-off between privacy and utility}: larger privacy budgets (less privacy) result in more accurate survival estimates, while stricter privacy constraints introduce greater noise, leading to increased RMSE. Figure~\ref{fig:sensitivity-analysis} visually depicts this trade-off, demonstrating that with higher privacy budgets, the survival estimates more closely approximate non-private estimates, achieving improved utility.

\begin{figure}[h]
    \centering
    \includegraphics[width=0.7\textwidth]{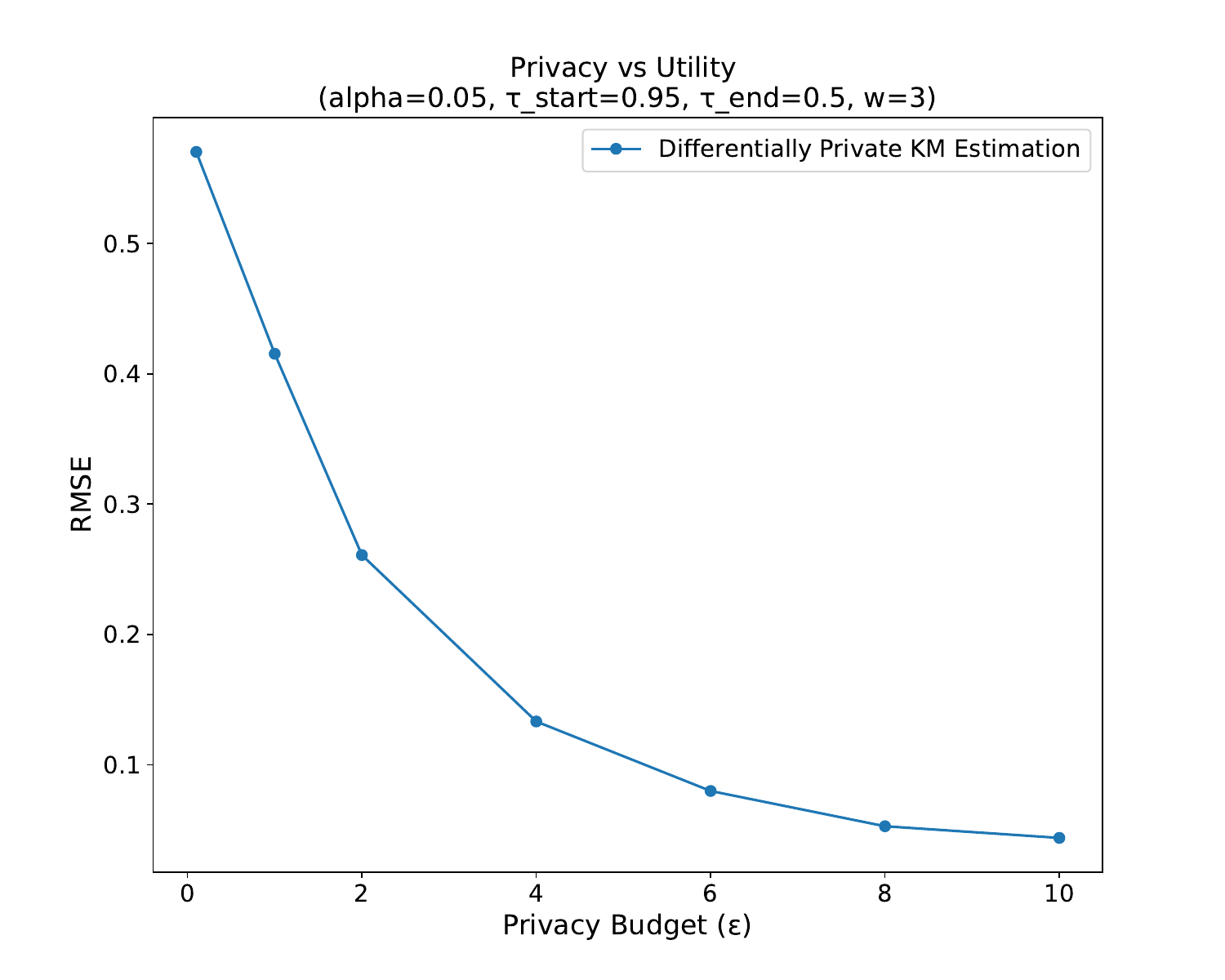}
    \caption{Sensitivity Analysis of RMSE Across Privacy Budgets ($\epsilon$) Showing the Privacy-Utility Trade-Off.}
    \label{fig:sensitivity-analysis}
\end{figure}

\subsection{Differentially Private KM Curves Comparison}

The comparison between the non-private and differentially private (DP) Kaplan-Meier (KM) survival curves demonstrates how varying the privacy budget, $\epsilon$, influences the accuracy and stability of survival probability estimates over time (refer Figure~\ref{fig:km-comparison}). At lower values of $\epsilon$, such as $\epsilon = 0.1$, the DP KM survival probabilities show significant fluctuations, with probabilities dropping to zero at various points. For instance, beginning from time $t = 5.0$, the DP KM survival probabilities remain at zero across several time points, diverging markedly from the non-private estimates. This indicates that when the privacy budget is very low, the noise added for privacy preservation severely distorts the survival probabilities.

As $\epsilon$ increases, the DP KM curves begin to stabilize and align more closely with the non-private survival estimates. For example, with $\epsilon = 1$, the DP KM curve maintains positive survival probabilities but still shows discrepancies from the non-private curve at certain time points, particularly at earlier times. At $\epsilon = 2$, the DP KM curve shows even fewer fluctuations and provides more stable estimates than at $\epsilon = 1$, although it remains somewhat lower than the non-private survival probabilities.

With higher values of $\epsilon$, such as $\epsilon = 4$, $\epsilon = 6$, and $\epsilon = 8$, the DP KM estimates continue to improve in alignment with the non-private survival probabilities, maintaining consistency without abrupt drops. By the time $\epsilon = 10$, the DP KM curve closely approximates the non-private survival probabilities across all observed time points, producing survival estimates that nearly match the non-private curve.

This pattern is consistent with findings from the RMSE sensitivity analysis, where higher values of $\epsilon$ yield lower RMSE values, indicating enhanced utility. With a larger privacy budget, the impact of noise diminishes, enabling the DP KM curves to better reflect the true survival probabilities.

In summary: \begin{itemize} \item \textbf{Low privacy budgets} (e.g., $\epsilon = 0.1$) introduce substantial noise, resulting in significant fluctuations and frequent zero probabilities. This demonstrates that stringent privacy constraints severely impact the utility of survival estimates. \item \textbf{Moderate privacy budgets} (e.g., $\epsilon = 4$, $\epsilon = 6$, and $\epsilon = 8$) allow the DP KM curves to align more closely with the non-private survival probabilities, maintaining smooth estimates with only minor deviations. \item \textbf{High privacy budgets} (e.g., $\epsilon = 10$) provide DP KM estimates that almost entirely match the non-private survival probabilities, suggesting that the added noise has minimal impact on the utility of the survival estimates. \end{itemize}

\begin{figure}[h] 
\centering 
\includegraphics[width=0.7\textwidth]{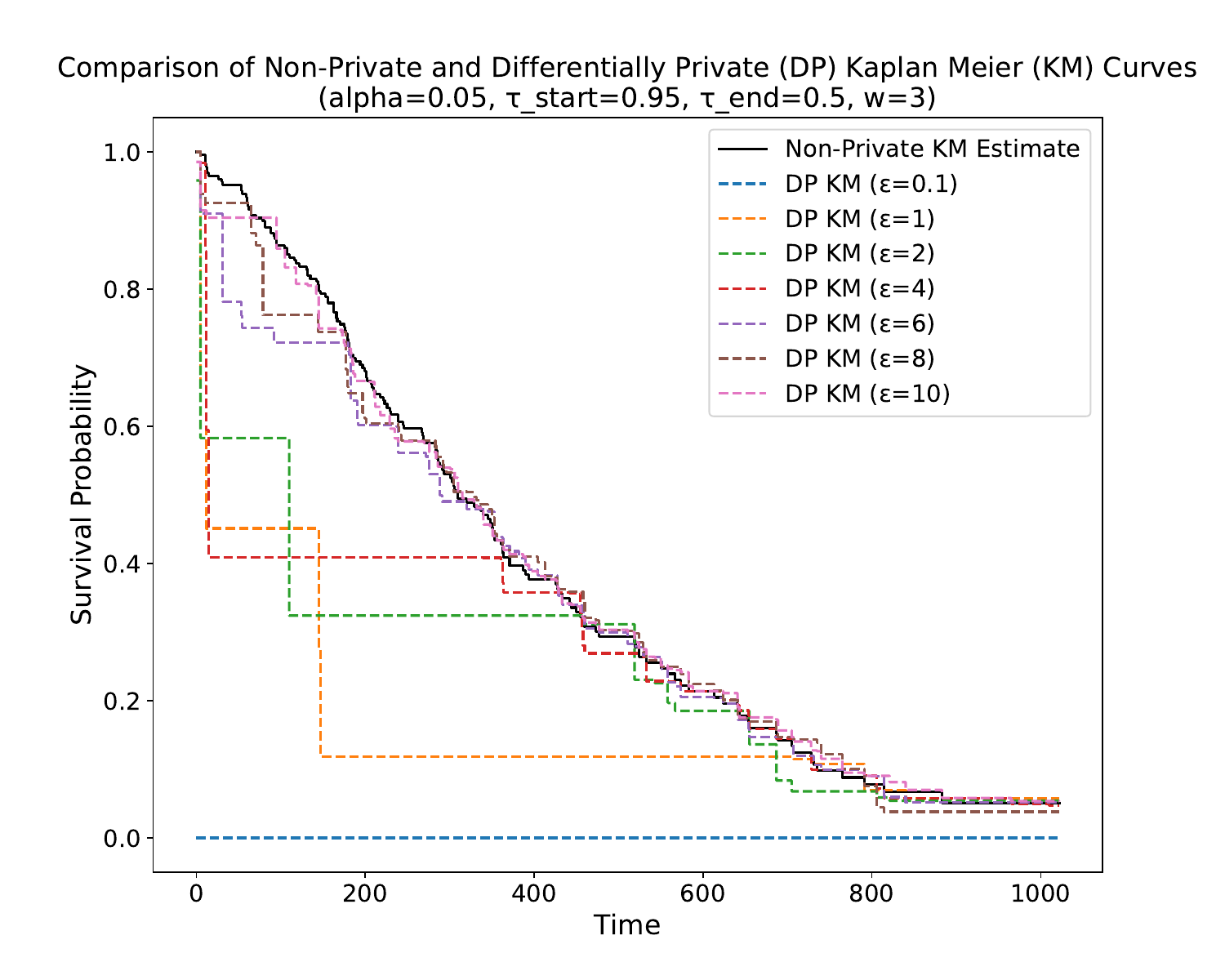} 
\caption{Comparison of Non-Private and Differentially Private Kaplan-Meier Curves Across Varying Privacy Budgets ($\epsilon$).} 
\label{fig:km-comparison}
\end{figure}

Both the privacy-utility analysis (Figure~\ref{fig:sensitivity-analysis}) and the DP KM curve comparisons (Figure~\ref{fig:km-comparison}) illustrate that increasing $\epsilon$ enhances the alignment between DP and non-private survival estimates. This is demonstrated by decreasing RMSE and increasingly stable DP KM curves. While low privacy budgets cause significant deviations from non-private estimates, particularly at later time points with fewer observations, higher privacy budgets effectively mitigate the noise, leading to more reliable and accurate survival probability estimates.

In summary:
\begin{itemize}
    \item \textbf{Low $\epsilon$ values} yield higher noise, higher RMSE, and fluctuating survival estimates, emphasizing privacy at the cost of utility.
    \item \textbf{High $\epsilon$ values} result in low RMSE and DP KM curves that closely align with non-private estimates, achieving better utility.
\end{itemize}

\subsection{Time-Indexed Noise Scaling Effect}
\begin{figure}[htbp!]
    \centering
    \includegraphics[width=0.8\textwidth]{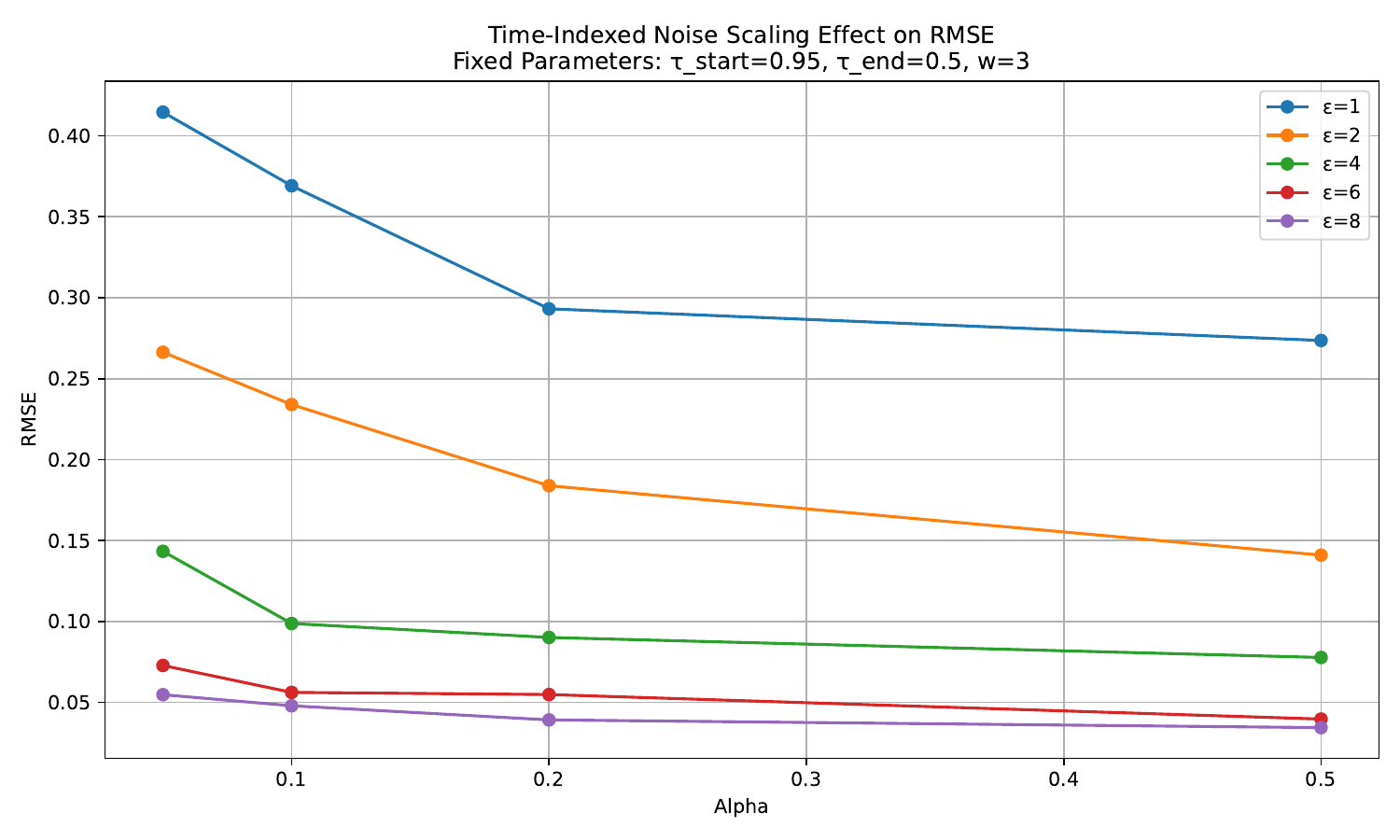}
    \caption{Effect of Time-Indexed Noise Scaling on RMSE for varying $\alpha$ values under different $\epsilon$.}
    \label{fig:time-indexed-noise-scaling}
\end{figure}
In this experiment, we evaluate the effect of varying the noise scaling factor $\alpha$ on RMSE under different privacy budgets $\epsilon$. The noise scaling factor $\alpha$ controls how quickly the noise decays over time. Lower $\alpha$ values indicate slower decay, resulting in higher noise retention at later time points.

The results, shown in Table~\ref{table:time-indexed-noise-scaling} and Figure~\ref{fig:time-indexed-noise-scaling}, indicate that as $\alpha$ increases, the \textbf{mean RMSE} generally decreases, particularly for higher values of $\epsilon$. This is because higher $\alpha$ values cause the noise to decay faster, reducing its impact on later time points. Additionally, confidence intervals show narrower bounds at higher $\epsilon$, reflecting greater accuracy in survival probability estimates with a larger privacy budget.

For instance, at $\epsilon = 8$ and $\alpha = 0.50$, the mean RMSE is minimized at \textbf{0.039} with a 95\% confidence interval of [0.0075, 0.160]. This shows that the algorithm provides a strong balance between privacy and utility, particularly with higher privacy budgets.

\begin{table}[h]
    \centering
    \caption{Selected results illustrating the effect of time indexed noise scaling confidence intervals are presented in this table for an illustrative purpose. The complete set of results can be found in Figure~\ref{fig:time-indexed-noise-scaling} }
    \label{table:time-indexed-noise-scaling}
    \begin{tabular}{cccccc}
        \toprule
        $\epsilon$ & $\alpha$ & Mean RMSE & CI Lower & CI Upper \\
        \midrule
        1 & 0.05 & 0.4257 & 0.2289 & 0.5990 \\
        1 & 0.10 & 0.3825 & 0.1701 & 0.5990 \\
        1 & 0.20 & 0.3074 & 0.0997 & 0.5990 \\
        8 & 0.50 & 0.0391 & 0.0075 & 0.1601 \\
        \bottomrule
    \end{tabular}
\end{table}

\subsection{Clipping Effect}
\begin{figure}[htbp!]
    \centering
    \includegraphics[width=0.8\textwidth]{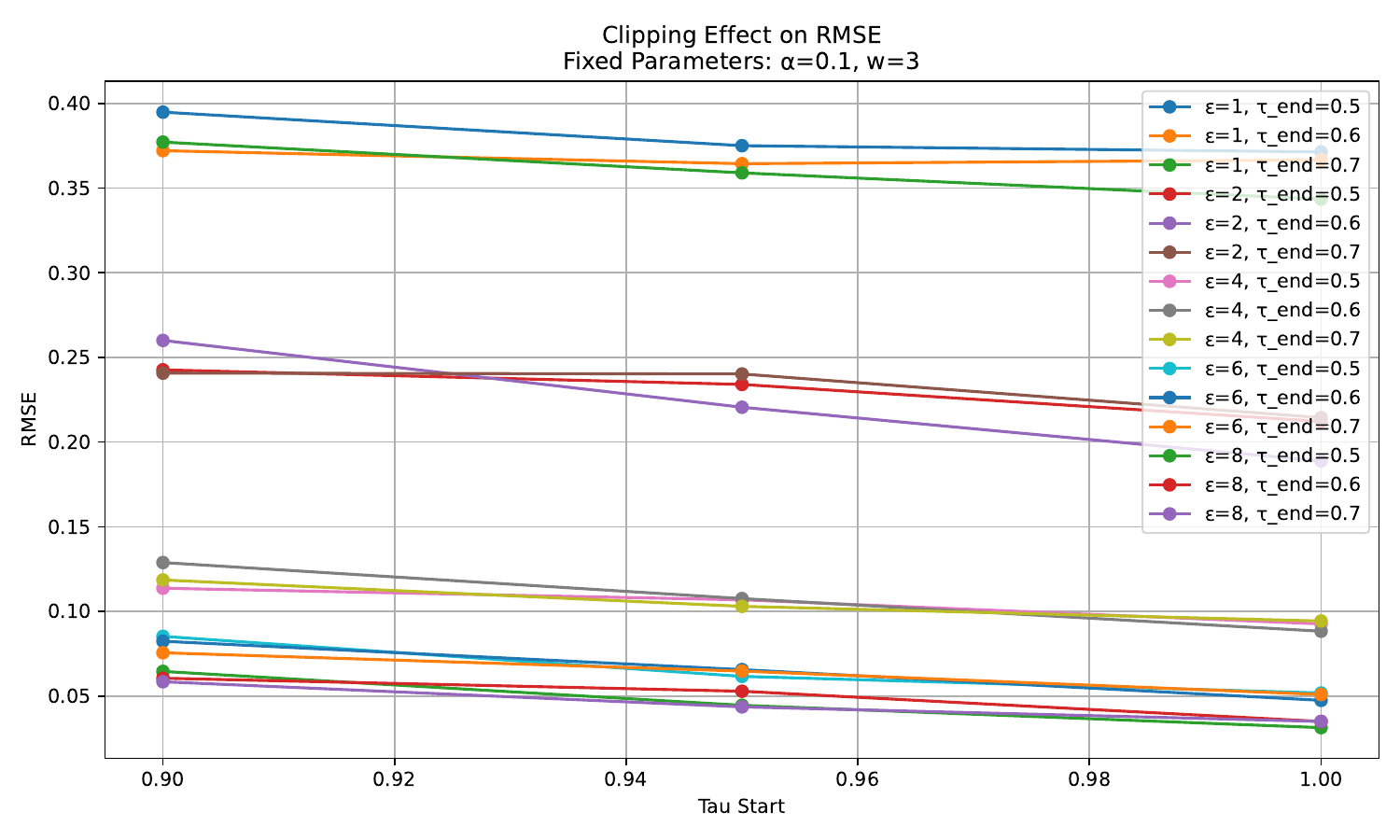}
    \caption{Effect of Clipping on RMSE for varying $\tau_{\text{start}}$ and $\tau_{\text{end}}$ values under different $\epsilon$.}
    \label{fig:clipping-effect}
\end{figure}
The clipping thresholds ($\tau_{\text{start}}$ and $\tau_{\text{end}}$) control the upper bounds on noisy survival probabilities, limiting extreme noise effects. We explore a range of starting and ending thresholds for different values of $\epsilon$ to understand how this impacts RMSE.

Table~\ref{table:clipping-effect} and Figure~\ref{fig:clipping-effect} present the results, showing that lower clipping thresholds (e.g., $\tau_{\text{start}} = 0.90$, $\tau_{\text{end}} = 0.5$) generally yield higher RMSE, as they restrict the survival probabilities more strictly. Higher values, like $\tau_{\text{start}} = 1.00$ and $\tau_{\text{end}} = 0.7$, provide lower RMSE with broader confidence intervals, indicating some variability but better alignment with actual survival probability values.

For example, for $\epsilon = 8$ with $\tau_{\text{start}} = 1.00$ and $\tau_{\text{end}} = 0.6$, the mean RMSE was \textbf{0.0347} with a 95\% confidence interval of [0.0128, 0.100], which suggests that clipping thresholds can be fine-tuned to improve estimation accuracy while preserving privacy.

\begin{table}[h]
    \centering
    \caption{Selected results illustrating the effect of clipping with confidence intervals are presented in this table for an illustrative purpose. The complete set of results can be found in Figure~\ref{fig:clipping-effect}.}
    \label{table:clipping-effect}
    \begin{tabular}{cccccc}
        \toprule
        $\epsilon$ & $\tau_{\text{start}}$ & $\tau_{\text{end}}$ & Mean RMSE & CI Lower & CI Upper \\
        \midrule
        1 & 0.90 & 0.5 & 0.3845 & 0.1732 & 0.5990 \\
        1 & 0.95 & 0.5 & 0.3516 & 0.1588 & 0.5990 \\
        1 & 1.00 & 0.7 & 0.3760 & 0.1605 & 0.5990 \\
        8 & 1.00 & 0.6 & 0.0347 & 0.0128 & 0.1002 \\
        \bottomrule
    \end{tabular}
\end{table}

\subsection{Smoothing Effect}
\begin{figure}[htbp!]
    \centering
    \includegraphics[width=0.8\textwidth]{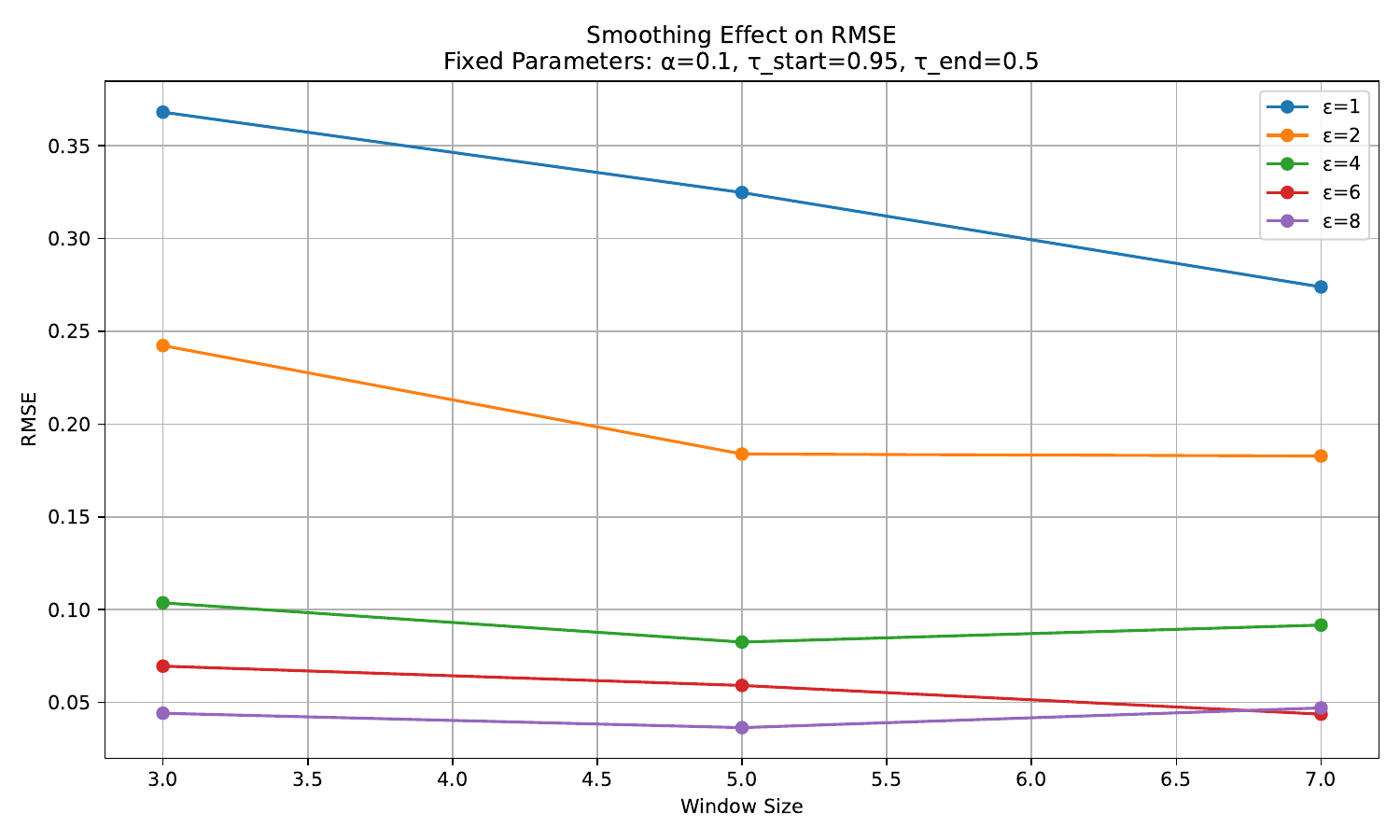}
    \caption{Effect of Smoothing on RMSE for varying window sizes $w$ under different $\epsilon$.}
    \label{fig:smoothing-effect}
\end{figure}
Smoothing, achieved by a moving average window, mitigates abrupt changes in noisy survival probabilities, enhancing the accuracy of the cumulative survival curve. This experiment examines the impact of window sizes $w = 3, 5,$ and $7$ on RMSE across different $\epsilon$ values.

Table~\ref{table:smoothing-effect} and Figure~\ref{fig:smoothing-effect} summarize the results. A larger smoothing window tends to decrease RMSE, especially for lower privacy budgets (e.g., $\epsilon = 1$ and $\epsilon = 2$), as it smooths out fluctuations more effectively. However, for high privacy budgets (e.g., $\epsilon = 8$), the improvement in RMSE with larger windows is less pronounced. This indicates that smoothing is most beneficial in scenarios with tighter privacy constraints where noise levels are inherently higher.

For example, with $\epsilon = 4$ and $w = 7$, the mean RMSE was \textbf{0.095} with a confidence interval of [0.0191, 0.3936], illustrating how smoothing can stabilize survival estimates even with moderate privacy budgets.

\begin{table}[h]
    \centering
    \caption{Selected Results of Smoothing Effect Confidence Intervals are presented in this table for an illustrative purpose. The complete set of results can be found in Figure~\ref{fig:smoothing-effect}.}
    \label{table:smoothing-effect}
    \begin{tabular}{ccccc}
        \toprule
        $\epsilon$ & Window Size $w$ & Mean RMSE & CI Lower & CI Upper \\
        \midrule
        1 & 3 & 0.3635 & 0.1577 & 0.5990 \\
        1 & 5 & 0.3375 & 0.1164 & 0.5990 \\
        8 & 7 & 0.0359 & 0.0164 & 0.1113 \\
        \bottomrule
    \end{tabular}
\end{table}

\subsection{Membership Inference Attacks}
\begin{figure}[htbp!]
    \centering
    \includegraphics[width=16cm, height=20cm]{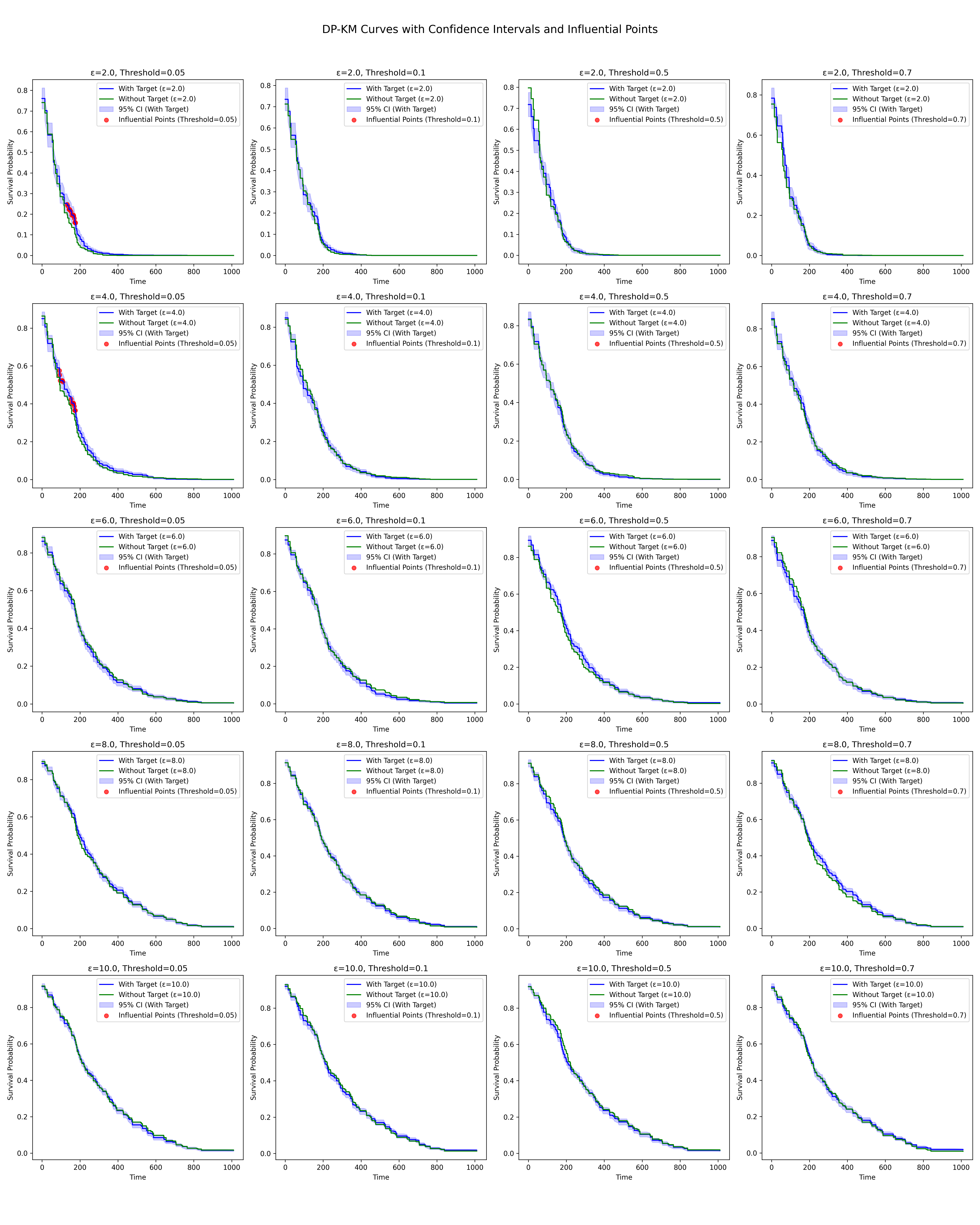}
    \caption{Membership Inference Attack Analysis: Influential Points Across Privacy Budgets and Thresholds for Kaplan-Meier Curves.}
    \label{fig:membership-inference}
\end{figure}

A membership inference attack aims to determine if a specific individual's data point was included in the dataset used to generate a model. In our experiments, we conduct membership inference attacks on the differentially private Kaplan-Meier (DP-KM) survival curves by examining the effects of excluding a targeted data point from the dataset. We perform the attack by comparing the absolute change in magnitude of the survival probabilities from models generated with and without the target data point. We leave the more sophisticated approaches of computing an empirical lower-bounds to the privacy budget using membership inference attacks (such as in~\cite{jagielski2020auditing,nasr2021adversary}) for future work.

To quantify the impact of excluding the target point, we calculate the differences in survival probabilities between the two models across multiple trials. We then identify \textit{influential points} as time points where the absolute difference in survival probability between the two models exceeds a pre-defined threshold. In our analysis, we vary the threshold values across 0.05, 0.1, 0.5, and 0.7 to observe the effects of privacy budgets on influential points as shown in Figure~\ref{fig:membership-inference}.

\subsubsection{Sensitivity to Thresholds} When the threshold is set to a low value, such as 0.05, slight deviations in survival probability due to differential privacy noise can cause a time point to be flagged as influential. The added Laplace noise varies randomly in each trial, thus causing fluctuations in the number of influential points. As the threshold increases (e.g., to 0.5 or 0.7), the impact of smaller fluctuations diminishes, leading to fewer time points surpassing the threshold and more stable influential point counts across trials.

\subsubsection{Impact of Privacy Budget} As the privacy budget $\epsilon$ increases, the noise added for differential privacy decreases, leading to closer alignment between the survival estimates with and without the target data point. This reduction in noise makes it less likely for survival probability differences to exceed higher thresholds, resulting in fewer influential points. For example, at $\epsilon = 10$, the DP-KM curves closely resemble the non-private survival curve, and we see fewer influential points even at lower thresholds.

\subsection{Interpretation of Results}
The results indicate that at lower thresholds (e.g., 0.05), the membership inference attack becomes more sensitive to random noise fluctuations introduced by differential privacy, leading to greater variability in influential point counts across trials. This variability arises because each trial introduces new noise, causing slight differences in survival probabilities. At lower thresholds, these small differences are often enough to surpass the threshold, leading to inconsistencies in the number of influential points detected.

In practical terms, a high variability in influential points suggests that detecting the membership status of an individual at very low thresholds is unreliable due to the random noise inherent in differential privacy. Consequently, low thresholds might lead to false positives in membership inference, where time points are flagged as influential due to random noise rather than actual influence from the target data point.

In summary:
\begin{itemize}
    \item \textbf{Lower thresholds} increase the sensitivity to noise, resulting in more variation across trials and a higher likelihood of false positives.
    \item \textbf{Higher thresholds} reduce the effect of minor noise fluctuations, stabilizing the number of influential points and providing stronger protection against membership inference attacks.
\end{itemize}

\section{Summary of Results}

\subsection{Privacy-Utility Trade-off Analysis}
\textbf{Objective}: Assess how varying privacy budgets ($\epsilon$) affect the accuracy of survival probability estimates.

\textbf{Key Findings}: As $\epsilon$ increases, the RMSE between the differentially private (DP) and non-private Kaplan-Meier (KM) estimates decreases, indicating improved accuracy with higher $\epsilon$ values (less privacy). With lower $\epsilon$ values, survival probabilities experience higher noise, leading to increased RMSE and greater deviation from non-private estimates.

\textbf{Interpretation}: This result reveals a clear privacy-utility trade-off: low $\epsilon$ values introduce significant noise, causing greater distortion in survival estimates, while high $\epsilon$ values reduce noise impact, improving utility at the cost of privacy.

\subsection{Effect of Parameters on Differentially Private KM Curves}

\textbf{Objective}: Investigate how $\alpha$ (time-indexed noise scaling), $\tau_{\text{start}}$ and $\tau_{\text{end}}$ (dynamic clipping thresholds), and $w$ (smoothing window size) impact DP KM curve stability and accuracy.

\begin{itemize}
    \item \textbf{Noise Scaling Factor ($\alpha$)}: 
    Increasing $\alpha$ improves the stability of the survival curve by progressively reducing the amount of noise added at later time points, where survival probabilities are lower. In contrast, lower $\alpha$ values apply a uniform level of noise across all time points, which can result in higher variability, particularly at early time points when the number of surviving individuals is larger or in a dataset with a few individuals failing at early time points. 
  
    \item \textbf{Clipping Thresholds ($\tau_{\text{start}}$, $\tau_{\text{end}}$)}: Setting a high $\tau_{\text{start}}$ (near 1) with a moderately lower $\tau_{\text{end}}$ provides a realistic survival curve shape while constraining extreme values caused by noise. With lower clipping thresholds, noise effects can distort survival probabilities at early time points, whereas too high thresholds make survival estimates less sensitive to individual variations, diminishing privacy.
    \item \textbf{Smoothing Window Size ($w$)}: Increasing $w$ improves smoothness of the DP KM curves, reducing noise fluctuation but potentially averaging out real trends in the data. A small $w$ retains more granular survival trends but may show higher variability due to noise.
\end{itemize}

\textbf{Interpretation}: Fine-tuning these parameters allows the balance of privacy and utility by controlling noise intensity, clipping levels, and curve smoothness. When a target error tolerance threshold for RMSE is defined, the optimal settings for parameters (e.g., $\alpha$, $\tau_{\text{start}}$, $\tau_{\text{end}}$, and $w$) can be determined using various optimization algorithms to meet the given target with minimal deviations from non-private estimates. By fine-tuning $\alpha$ for time-indexed noise decay, adjusting $\tau_{\text{start}}$ and $\tau_{\text{end}}$ for dynamic clipping, and selecting an appropriate smoothing window size $w$, these parameters can be optimized to achieve a stable differentially private Kaplan-Meier curve that aligns closely with the desired error tolerance.

\subsection{Comparison of Non-Private and Differentially Private Kaplan-Meier Curves}

\textbf{Objective}: Evaluate alignment between DP and non-private KM curves across different $\epsilon$ values.

\textbf{Key Findings}: 
\begin{itemize}
    \item At low $\epsilon$ values (e.g., $\epsilon = 0.1$), DP KM curves exhibit frequent zero values at later time points due to high noise, deviating significantly from non-private estimates.
    \item As $\epsilon$ increases, DP KM curves align more closely with non-private estimates, especially at $\epsilon = 10$, which shows minimal deviation across all time points.
    \item Intermediate $\epsilon$ values (e.g., $\epsilon = 4$, $6$, and $8$) maintain stable survival estimates without sharp fluctuations, though remaining slightly below non-private probabilities.
\end{itemize}

\textbf{Interpretation}: Higher $\epsilon$ values reduce noise impact, enabling DP KM curves to reflect true survival trends. This finding reinforces the privacy-utility trade-off: higher privacy budgets provide closer alignment with non-private curves, yielding better utility at a reduced privacy level.

\subsection{Sensitivity to Influential Points and Membership Inference Attack Success}

\textbf{Objective}: Assess model vulnerability to membership inference attacks by identifying influential points based on thresholds.

\textbf{Key Findings}:
\begin{itemize}
    \item Lower $\epsilon$ values introduce high noise, leading to frequent influential points where survival probabilities deviate at key time points. Small thresholds (e.g., 0.05) yield more influential points across trials due to greater sensitivity to noise effects.
    \item Higher $\epsilon$ values (e.g., $\epsilon \geq 6$) reduce the number of influential points, especially at higher thresholds, lowering susceptibility to membership inference attacks.
\end{itemize}

\textbf{Interpretation}: The presence of influential points highlights potential privacy risks, as these points reveal membership sensitivity. Lower $\epsilon$ values introduce noise that makes survival estimates more susceptible to attack, while higher $\epsilon$ values reduce this risk by aligning more closely with non-private curves, making membership harder to infer.

\subsection{General Observations Across Experiments}

\begin{itemize}
    \item \textbf{Low privacy budgets} ($\epsilon = 0.1$ to $2$) lead to high noise, resulting in high RMSE, fluctuating survival probabilities, and greater vulnerability to membership inference attacks.
    \item \textbf{Moderate privacy budgets} ($\epsilon = 4$ to $6$) balance utility and privacy, providing stable survival estimates with few influential points, while maintaining alignment with non-private estimates.
    \item \textbf{High privacy budgets} ($\epsilon = 8$ and above) offer the best alignment with non-private KM curves, minimal noise impact, and few influential points, optimizing utility at reduced privacy protection.
\end{itemize}

The combined privacy-utility analysis (Figure~\ref{fig:sensitivity-analysis}) and DP KM curve comparisons (Figure~\ref{fig:km-comparison}) confirm that carefully balancing $\epsilon$ and other model parameters reduces RMSE, aligns DP survival estimates with non-private data, and mitigates the privacy risks associated with influential points.

\section{Related Work}

Differential privacy has been widely studied to protect sensitive information across various machine learning and statistical models, particularly against membership inference attacks, where adversaries aim to determine if a specific record was included in the training dataset.

Gondara \cite{gondara2020differentially} was among the first to explore differential privacy in survival analysis by adding uniform noise to survival probability estimates, which provided privacy guarantees. However, Gondara's approach did not adapt noise levels to changes in the at-risk population size, resulting in potential over- or under-estimations as survival probabilities evolve over time. In contrast, our approach leverages time-indexed noise scaling, dynamic clipping, and smoothing, ensuring the differential privacy noise adapts to changes in the cumulative structure of the Kaplan-Meier estimator, producing more accurate and stable survival curves.

The membership inference attack framework by Shokri et al. \cite{shokri2017membership} introduced shadow models to estimate membership inference vulnerabilities in machine learning models. Although effective in detecting privacy risks, Shokri’s method focuses solely on identifying privacy risks post hoc and does not modify the models to mitigate them proactively. Our work diverges by embedding differential privacy mechanisms directly into the Kaplan-Meier estimation, reducing membership inference risks through adaptive noise and clipping adjustments instead of relying on detection alone.

In \cite{fan2023mitigating}, Fan and Bonomi apply differential privacy to deep learning-based survival models like DeepSurv and DeepHit, implementing noise addition and gradient clipping during training. While this method effectively maintains privacy in neural survival models, it is complex and less interpretable when applied to traditional survival analysis methods. Our approach differs by focusing on the Kaplan-Meier estimator, a widely-used, interpretable method in medical and clinical contexts, thereby providing privacy-preserving survival analysis with clearer interpretability.

Bonomi et al. \cite{bonomi2020protecting} also address privacy concerns in survival analysis by incorporating noise addition techniques. However, their approach is primarily suited for protecting individual patient data in aggregate models and does not incorporate adaptive mechanisms based on the dynamic population at risk. In contrast, our method's adaptive noise scaling and clipping thresholds dynamically adjust to the at-risk population size over time, enhancing the stability and accuracy of the privacy-preserving Kaplan-Meier estimator.

Lastly, Winograd-Cort et al. \cite{winograd2017framework} present the Adaptive Fuzz framework, which introduces adaptive differential privacy composition for flexible application across different privacy queries. While this framework is instrumental for adaptive applications and allows for dynamic privacy management, it does not specifically target survival analysis or Kaplan-Meier estimation. Our approach integrates similar adaptive principles but is specifically tailored for survival analysis, using Kaplan-Meier curves with mechanisms tuned for time-indexed survival probabilities.

In summary, while existing works cover privacy in survival analysis through uniform noise addition, complex neural models, and adaptive frameworks, our approach offers a refined and focused application for Kaplan-Meier estimations. By introducing adaptive time-indexed noise scaling, dynamic clipping, and smoothing, our method maintains the cumulative nature of Kaplan-Meier survival estimates and addresses the specific privacy challenges in survival analysis.

\section{Conclusion}

In this work, we addressed key limitations in privacy-preserving survival analysis by proposing a novel approach that integrates time-indexed noise scaling, adaptive clipping, and smoothing mechanisms. These innovations effectively preserve the cumulative structure of Kaplan-Meier estimates while significantly reducing root mean squared error (RMSE). Our method demonstrated notable improvements in both privacy and utility on the NCCTG lung cancer dataset, particularly under higher privacy budgets (e.g., $\epsilon \geq 6$), where it effectively minimized the influence of outliers and enhanced resilience against membership inference attacks.

Despite these advancements, our approach has certain limitations that warrant further exploration. The dependence of the total privacy budget $\hat{\varepsilon} = O(\epsilon \cdot \alpha n^2)$ on the total number of checkpoints $n$ is an area for improvement. Refining the scheme and analysis could reduce this dependency and optimize privacy usage. Additionally, incorporating the smoothing mechanism directly into the Laplace mechanism offers the potential to enhance the privacy-utility tradeoff. Finally, testing our system against more adversarial membership inference attacks would provide a clearer understanding of its robustness to information leakage.

Nonetheless, the proposed method already demonstrates strong empirical performance on a realistic dataset (NCCTG lung cancer dataset), bridging the gap between practical utility and theoretical guarantees. We believe that future iterations can further close this gap, advancing the field of privacy-preserving survival analysis.

\bibliography{sn-bibliography}

\end{document}